\newcommand{\bm}{}
\titleformat*{\section}{\large\bfseries}
\titleformat*{\subsection}{\bfseries}
\theoremstyle{plain}
  \newtheorem{theorem}{Theorem}[section]
  \newtheorem{corollary}[theorem]{Corollary}
\theoremstyle{definition}
  \newtheorem{remark}[theorem]{Remark}
  \newtheorem*{notation}{Notation}
\theoremstyle{remark}
\newcommand{\p}{\partial} 
\newcommand{\nn}{\nonumber} 
\renewcommand{\vec}{\mathbf} 
\newcommand{\R}{\mathbb{R}}
\newcommand{\E}{\mathbb E}
\renewcommand{\phi}{\varphi} 
\renewcommand{\epsilon}{\varepsilon} 
\DeclarePairedDelimiter{\abs}{\lvert}{\rvert} 
\DeclarePairedDelimiter{\norm}{\lVert}{\rVert} 
\newcommand{\law}{\textup{law}}
\title{\bfseries\Large May--Wigner transition in large random dynamical systems}
\author{J.~R.~Ipsen\\[1em]%
\small ARC Centre of Excellence for Mathematical and Statistical Frontiers\\%
\small School of Mathematics and Statistics, The University of Melbourne, Victoria 3010, Australia%
}
\date{\today}
\begin{document}	

\maketitle

\begin{abstract}
\noindent
We consider stability in a class of random non-linear dynamical systems characterised by a relaxation rate together with a Gaussian random vector field which is white-in-time and spatial homogeneous and isotropic. We will show that in the limit of large dimension there is a stability-complexity phase transition analogue to the so-called May--Wigner transition known from linear models. Our approach uses an explicit derivation of a stochastic description of the finite-time Lyapunov exponents. These exponents are given as a system of coupled Brownian motions with hyperbolic repulsion called geometric Dyson Brownian motions. We compare our results with known models from the literature.
\end{abstract}

\section{Introduction}

In many complex systems, time evolution is modelled mathematically by a system of $n$ coupled non-linear first-order ordinary differential equations. In vector notation, we may write such a system as
\begin{equation}\label{dyn-sys}
\frac{d\vec x(t)}{dt}=\vec v(\vec x(t);t),
\end{equation}
where $\vec x$ is an $n$-dimensional vector-valued function and  $\vec v$ is an $n$-dimensional non-linear vector field specified by the model in question. As a mental image, we might think of $\vec x$ as the location of a test particle moving in a velocity field $\vec v$.

While a few properties of the dynamical system~\eqref{dyn-sys} can be answered within a fairly general context, most questions of practical interest are highly dependent on the choice of the vector field $\vec v$. This dependence appears to be somewhat problematic because many (if not most) large and complex systems seem inaccessible (even numerically) and, even worse, our knowledge about the vector field $\vec v$ is often incomplete. So how can we make in any practical predictions whatsoever? Inspired by statistical physics, it seems natural to take the vector field $\vec v$ to be `maximally' random under some physical constraints. The hope is that universal behaviour is present in the `thermodynamical' limit $n\to\infty$.

A celebrated and highly influential model was introduced in a 1972 paper by May~\cite{May:1972}. He imagined an autonomous dynamical system expanded to first order about an equilibrium point $\vec x_e$ and wrote
\begin{equation}\label{May}
\frac{d\vec r(t)}{dt}=-\mu\vec r(t)+\mathbf M\,\vec r(t)
\end{equation}
with $\vec r(t)=\vec x(t)-\vec x_e$ denoting a small perturbation, $\mu$ a positive constant and $\mathbf M$ an $n$-by-$n$ random matrix. This linear equation is, of course, much simpler than the original non-linear monstrosity~\eqref{dyn-sys}.
In~\eqref{May}, we should think of $\mu$ as a decay rate such that if the matrix $\mathbf M$ vanishes, then any small perturbation dies out at that rate. The matrix $\mathbf M$ incorporates local structure near the equilibrium and should be taken random up to some physical constraints. May was mainly interested in the stability of large ecosystems with $\mathbf M$ being a community matrix incorporating interactions between species (whether cooperative, competitive, or predator-prey interaction); we refer to~\cite{AT:2015} for a review of May inspired models in ecology. For many physicists, it might seem more natural to impose a symmetry constraint such as isotropy (in distribution). If we additionally assume Gaussian disorder, then this symmetry constraint would imply that $\mathbf M$ is a matrix from so-called Gaussian elliptic ensemble~\cite{SCSS:1988}. 

It is well-known that the linear model~\eqref{May} describes a stable (unstable) equilibrium point if the largest Lyapunov exponent is less (greater) than zero, or equivalently, if the eigenvalue of $\mathbf M$ with the greatest real part is less (greater) than $\mu$. The crucial observation made by May~\cite{May:1972} was that (under quite general conditions) this stability-complexity transition turns into a phase transition in the limit of large systems, $n\to\infty$. This phase transition is known as the May--Wigner transition. The simplest situation arises when $\mathbf M$ is a symmetric matrix in which case the eigenvalues of $\mathbf M$ is identical to the Lyapunov exponents. Here, large deviation results for symmetric Gaussian matrices has been used to show that the May--Wigner transition is a third order phase transition, see e.g.~\cite{MS:2014}.

The main drawback of May's model~\eqref{May} is that it is only physically meaningful locally. We can use May's model to determine whether a given equilibrium point is stable, but the model can never tell us whether this equilibrium is globally or locally stable, neither can it tell us the number of equilibria. This is a significant problem, because large complex systems are notorious for having a plethora of metastable states. In order to address this issue, Fyodorov and Khoruzhenko~\cite{FK:2016} have recently suggested a non-linear generalisation of May's model which is globally meaningful. The Fyodorov--Khoruzhenko model reads
\begin{equation}\label{FK-model}
\frac{d\vec x(t)}{dt}=-\mu\vec x(t)+\vec f(\vec x(t)),
\end{equation}
where $\mu$ is a positive constant (as in May's model) and $\vec f$ is an $n$-dimensional zero-mean Gaussian vector field assumed to be homogeneous and isotropic (in distribution). It was shown in~\cite{FK:2016} that this model also has a phase transition in the limit of large systems, $n\to\infty$ (actually there is an additional phase transition~\cite{BAFK:2017} but this additional transition is irrelevant for our purposes). If the fluctuations of the Gaussian field are sufficiently small then the system~\eqref{FK-model} is expected to have a single equilibrium point (this equilibrium must be stable since $\mu>0$); on the other hand, if the fluctuations of the Gaussian field are too large then the system is expected to have exponentially many equilibria. The region of phase space with a single equilibrium is interpreted as a stable phase, while the region with exponentially many equilibria is interpreted as a chaotic (or unstable) phase.
Thus, the phase transition between these two regions of phase space is considered as a non-linear generalisation of the May--Wigner transition.
We note that the Fyodorov--Khoruzhenko model is highly inspired by previous studies of random landscapes~\cite{Fyodorov:2004,FN:2012} (see~\cite{Fyodorov:2013} for a review) and it is intimately related to (mean field) spin glass theory~\cite{MPV:1987,CC:2005,AAC:2013}.

In both May's model~\eqref{May} and the Fyodorov--Khoruzhenko model~\eqref{FK-model} it is assumed that the dynamical system is autonomous, i.e. the driving vector field has no explicit dependence on time, $\vec v(\vec x;t)=\vec v(\vec x)$. This assumption is reasonable in some models but not in others. Moreover, even in models where the assumption of time-independence is reasonable, this is often only true in an approximate sense. For example, in ecosystems we would expect that the inter-species interactions change over time due to environmental changes. Likewise, in glassy systems we would expect a time-dependence to be introduced by atomic fluctuations. Thus, it seems more than reasonable to attempt to study dynamical systems~\eqref{dyn-sys} with an explicit time-dependence and, in particularly, provide non-autonomous generalisations of May's model and the Fyodorov--Khoruzhenko model. However, it is extremely challenging to include time correlations in a general setting. Thus, some simplification is needed also in the non-autonomous setting. A natural starting point for an analytic treatment is to consider systems driven by a vector field which is white-in-time. From a physical perspective, the assumption that our system is autonomous says that temporal fluctuations are slow compared to spatial fluctuations, while the white-in-time assumption says that spatial fluctuations are slow compared to temporal fluctuations. 

It is, of course, not a new idea to consider dynamical systems~\eqref{dyn-sys} in which the vector field $\vec v$ is white-in-time. In fact, this type of dynamical systems plays a central role in the study of fully developed turbulence~\cite{MY:1975,FGV:2001}, which was the inspiration for the mathematical theory of stochastic flows~\cite{Kunita:1997}. Surprisingly, these ideas have only very recently~\cite{IS:2016} been applied to the study of May--Wigner-like transitions. In~\cite{IS:2016}, we considered a white-in-time version of May's model~\eqref{May},
\begin{equation}\label{linear}
\frac{d\vec r(t)}{dt}=-\mu\vec r(t)+\mathbf M(t)\vec r(t),
\end{equation}
where $\mathbf M(t)$ is an isotropic Gaussian white-in-time process. It was shown that it was possible to switch from the matrix formulation~\eqref{linear} to a stochastic description of the finite-time Lyapunov exponents. This latter description was used to study stability. Interestingly, it was observed that the stability-complexity phase transition happens at a scale proportional to $n$, while the phase transition in May's model~\eqref{May} happens at a scale proportional to $n^{1/2}$. This is a manifestation of the (not so surprising) fact that the non-autonomous model~\eqref{linear} has a lower onset of complexity, than May's autonomous model~\eqref{May}.

Naturally, the linear model~\eqref{linear} has the same drawback as May's model, namely it is only physically meaningful in a neighbourhood of a stable trajectory. For this reason, it is the purpose of this paper to introduce and study stability in a white-in-time version of the Fyodorov--Khoruzhenko model which is globally meaningful. We will see that this new model has certain similarities with the linear model~\eqref{linear} but also some important differences. More precisely, we obtain a stochastic description for the finite-time Lyapunov exponents, which turns out to be identical to the linear model from~\cite{IS:2016} but with a restriction on parameter space. We use this description to draw a stability-complexity phase diagram. On a more heuristic level, we will also consider models with a more general time-dependence and discuss the possibility of persistent time-correlations.
The rest of this paper is organised as follows:\\
--- Section~\ref{sec:model}: we define the model which is the main subject of this paper.\\
--- Section~\ref{sec:from-DS-to-gDBm}: we show how to obtain a stochastic description of the finite-time Lyapunov exponents.\\
--- Section~\ref{sec:stability}: we study stability and observe that special behaviour arises in the weak-noise limit.\\
--- Section~\ref{sec:time}: we discuss the effect of time-dependence and auto-correlations in a more general setting.\\
--- Section~\ref{sec:conclusion}: we summarise our conclusions and state some open problems.\\
--- Appendix~\ref{appendix}: we present some important results for geometric Dyson Brownian motions.

\begin{notation}
Throughout this paper, lower case boldface symbols $\vec a,\ldots$ refer to $n$-dimensional (column) vectors, upper case boldface symbols $\mathbf A,\ldots$ refer to $n$-by-$n$ matrices, and $\vec a\bm\cdot\vec b$ refers to the usual dot product. Moreover, we will always use the standard notation,
\begin{equation}
\vec a=\begin{bmatrix} a_1 \\ \vdots \\ a_n\end{bmatrix}
\qquad \text{and} \qquad
\mathbf A=\begin{bmatrix} A_{11} & \cdots & A_{1n} \\ \vdots && \vdots \\ A_{n1} & \cdots & A_{nn} \end{bmatrix},
\end{equation}
for entries of vectors and matrices, and $\norm{\vec a}=(\vec a\bm\cdot\vec a)^{1/2}$ for the (Euclidean) norm.

\end{notation}

\section{A model for large complex dynamical systems}
\label{sec:model}

Let $\mu>0$ be a positive constant, $\vec x$ be an $n$-dimensional vector-valued function dependent on an external time $t$, and $\vec f$ be an $n$-dimensional vector field depending explicitly on the time $t$. Consider a generic dynamical system
\begin{equation}\label{SDE}
\frac{d\vec x(t)}{dt}=-\mu\vec x(t)+\vec f(\vec x(t);t)
\end{equation}
or equivalently, in integral form, 
\begin{equation}\label{SDE-int}
\vec x(t)-\vec x(s)=-\mu \int_s^t\vec x(u)du+\int_s^t \vec f(\vec x(u);u)du.
\end{equation}
In this paper, we consider the case where $\vec f$ is a zero-mean Gaussian random vector field, which is smooth-in-space and white-in-time. We note that this system is not independent of stochastic regularisation, i.e. there will be an `It\^o--Stratonovich dilemma'.
In the following, we have chosen the Stratonovich interpretation, which corresponds to the assumption that the underlying physical phenomenon responsible for the noise is time-reversible. We emphasize that all computations in this paper can easily be performed using any other stochastic regularisation (e.g. It\^o); this would merely introduce a spurious drift to the Lyapunov spectrum. 

Let us consider our random vector field in greater detail. Since $\vec f$ is zero-mean Gaussian, it is uniquely determined by its covariance (or diffusion) tensor
\begin{equation}\label{covar}
\E[f_i(\vec x;t)f_j(\vec y,s)]=D_{ij}(\vec x,\vec y)\delta(t-s),\qquad i,j=1,\ldots,n,
\end{equation}
where $D_{ij}$ is some smooth function. 
We additionally assume that our vector field is homogeneous and isotropic (in distribution), i.e.
 \begin{equation}\label{symmetries-spat}
 \vec f(\vec x;t)\stackrel{\law}{=}\vec f(\vec x+\vec a;t)
 \qquad\text{and}\qquad
 \vec f(\mathbf U\vec x;t)\stackrel{\law}{=}\mathbf U\vec f(\vec x;t)
 \end{equation}
for all translations $\vec a\in\R^n$ and all rotations $\mathbf U\in O(n)$.
It is worth noting that the dynamical equation~\eqref{SDE} preserves isotropy, while homogeneity is explicitly broken for $\mu>0$.

The statistical symmetries~\eqref{symmetries-spat} imply that the covariance tensor only depends on the spatial separation. More precisely, we have
\begin{equation}\label{covar-full}
D_{ij}(\vec r)\stackrel{\text{def}}{=}D_{ij}(\vec x,\vec x+\vec r)=\Gamma_1\Big(\frac{\norm{\vec r}^2}2\Big)\delta_{ij}+\Gamma_2\Big(\frac{\norm{\vec r}^2}2\Big)r_ir_j
\end{equation}
with $\Gamma_1$ and $\Gamma_2$ denoting smooth functions%
. We also require that
\begin{equation}\label{ineq}
\Gamma_1'(0)\leq0\leq\Gamma_1(0)
\qquad\text{and}\qquad
\frac{\Gamma_1'(0)}{n+1}\leq-\Gamma_2(0)\leq-\Gamma_1'(0).
\end{equation}
These latter constraints arises from obvious positivity conditions, see e.g.~\eqref{positivity} in the next section.

It worth making an additional remark before we continue with a stability analysis of our model. It is seen from~\eqref{SDE}, that for a fixed time $t>0$, we can apply an analysis identical to that produced by Fyodorov and Khoruzhenko~\cite{FK:2016} in order to find the expected number of equilibria at time $t$. However, this analysis would not tell us anything about stability. The reason for this disconnection between the number of equilibria and stability is clear: in our non-autonomous setting neither the location nor the number of equilibria is preserved in time. Thus, even for systems with a single stable equilibrium there is no guarantee for stability. 

\section{Lyapunov exponents and geometric Dyson Brownian motion}
\label{sec:from-DS-to-gDBm}

The main purpose of this section is to show how our stochastic dynamical system~\eqref{SDE} implies a stochastic description of the finite-time Lyapunov exponents. To do so, imagine that we introduce a small perturbation to the initial state of our system,  $\vec x(0)\mapsto\vec y(0)=\vec x(0)+\vec r(0)$. In order to study the Lyapunov exponents, we must study the evolution of the spatial separation $\vec r(t)=\vec y(t)-\vec x(t)$. Taking the integral equation~\eqref{SDE-int} as our starting point, we see that
\begin{equation}\label{SDE-r}
\vec r(t)-\vec r(s)=-\mu \int_s^t\vec r(u)du+\int_s^t \vec f(\vec y(u);u)du-\int_s^t \vec f(\vec x(u);u)du.
\end{equation}
For time-intervals in which the separation $\vec r(t)$ remains sufficiently small, we can use the expansion
\begin{equation}\label{taylor}
\int_s^t \vec f(\vec y(u);u)du=\int_s^t \vec f(\vec x(u);u)du+\int_s^t \mathbf J(\vec x(u);u)\vec r(u)du+\text{error},
\end{equation}
where (by construction) the Jacobian matrix $\mathbf J=(J_{ij})=(\p f_i/\p x_j)$ is zero-mean Gaussian and white-in-time. Using the regularity conditions imposed on $\vec f$, it can be seen that the error term in the expansion~\eqref{taylor} is subleading for sufficiently small separations. Thus, in the limit of an infinitesimal initial perturbation, the integral equation~\eqref{SDE-r} reduces to
\begin{equation}\label{SDE-r-int}
\vec r(t)-\vec r(s)=-\mu \int_s^t\vec r(u)du+\int_s^t \mathbf J(\vec x(u);u)\vec r(u)du.
\end{equation}
Writing the integral equation~\eqref{SDE-r-int} using the usual Stratonovich convention, we get
\begin{equation}\label{geo-vector}
d\vec r(t)=\big(-\mu\, dt\,\mathbf I+d\mathbf B(t)\big)\circ \vec r(t), \qquad
\mathbf B(t)\stackrel{\text{def}}{=}\int_0^t\mathbf J(\vec x(u);u)du.
\end{equation}
Here, $\mathbf B(t)$ is a zero-mean matrix-valued Brownian motion constructed such that the Jacobian matrix $\mathbf J(\vec x(t),t)$ is its formal time-derivative. 

Before we continue our study of the stochastic equation~\eqref{geo-vector}, we need to take a closer look at the noise term. It follows from the covariance tensor~\eqref{covar-full} that
\begin{align}\label{cor-brown}
\E[B_{ik}(t)B_{j\ell}(s)]&=\int_0^t\int_0^s\E[J_{ik}(\vec x(u);u)J_{j\ell}(\vec x(v);v)]dvdu \nn\\ 
&=-\big(\Gamma_1'(0)\delta_{ij}\delta_{k\ell}+\Gamma_2(0)(\delta_{ik}\delta_{j\ell}+\delta_{i\ell}\delta_{jk})\big)\min\{t,s\}.
\end{align}
We note that the fluctuations of the Brownian motion is independent of the spatial location $\vec x$. This spatial independence is a consequence of our vector field being both homogeneous-in-space and white-in-time; breaking either of these conditions will inevitable introduce spatial dependence, see section~\ref{sec:time}.
For notational simplicity, let us henceforth use abbreviations 
\begin{equation}
\Gamma_1'(0)=-\sigma^2
\qquad\text{and}\qquad
\Gamma_2(0)=-\sigma^2\tau.
\end{equation}
These constants must satisfy inequalities
\begin{equation}\label{inequal}
\sigma^2\geq 0 \qquad\text{and}\qquad -\frac1{n+1}\leq\tau\leq +1
\end{equation}
due to the positivity conditions
\begin{equation}\label{positivity}
\E\bigg[\bigg(\sum_{i=1}^nB_{ii}(t)\bigg)^{\!2\,}\bigg]\geq 0
\qquad\text{and}\qquad
\E\big[\big(B_{ij}(t)\pm B_{ji}(t)\big)^2\big]\geq0\quad \text{for}\ i\neq j.
\end{equation}
Note that the inequalities~\eqref{inequal} are consistent with the constraints~\eqref{ineq}. With the above notation, $\sigma^2$ has an interpretation as an overall `variance' of the Jacobian matrix, while $\tau$ is an interpolating parameter such that $\vec f$ is a conservative vector field for $\tau=+1$ and an incompressible vector field for $\tau=-1/(n+1)$. 

We can now return to our stochastic equation~\eqref{geo-vector}. Given an initial perturbation $\vec r(0)$, we may write the general solution of~\eqref{geo-vector} in the form $\vec r(t)=\mathbf U(t)\vec r(0)$. Here, the evolution matrix $\mathbf U(t)$ itself satisfies a stochastic equation
\begin{equation}\label{SDE-main}
d\mathbf U(t)=\big(-\mu\, dt\,\mathbf I+d\mathbf B(t)\big)\circ \mathbf U(t),\qquad\mathbf U(0)=\mathbf I.
\end{equation}
We may use this equation to find the Lyapunov spectrum. For completeness, let us briefly recall the meaning of Lyapunov exponents in our setting. We know from Osledec's multiplicative ergodic theorem~\cite{Oseledec:1968} that the following limit exists,
\begin{equation}
\lim_{t\to\infty}(\mathbf U(t)\mathbf U(t)^T)^{1/2t}=\mathbf H
\end{equation}
The matrix $\mathbf H$ is symmetric positive definite, hence we may denote its eigenvalues by $e^{\mu_1}\geq e^{\mu_2}\geq \cdots \geq e^{\mu_n}$. The exponents $\mu_1\geq \mu_2\geq \cdots\geq \mu_n$ are (by definition) called the Lyapunov exponents. If the largest Lyapunov exponent is unique ($\mu_1>\mu_2$) then we have the asymptotic property
\begin{equation}
\norm{\vec r(t)}\sim e^{\,\mu_1t}\norm{\vec r(0)} \qquad \text{for} \qquad t\to\infty.
\end{equation}
In other words, we have the familiar property that two points which are initially close tends to separate exponentially fast if the largest Lyapunov exponent is greater than zero, while their separation tends to decay exponentially fast when the Lyapunov spectrum is purely negative. 

The Lyapunov exponents are invaluable tool for studies of asymptotic behaviour, but they neglect any finite-time effects which may be physically relevant (see e.g.~\cite{MS:2014}). Amazingly, it turns out that our model is simple enough to allow us to construct a stochastic description for the finite-time Lyapunov exponents, which includes all finite-time effects. Let $e^{\lambda_1(t)}\geq\cdots\geq e^{\lambda_n(t)}$ denote the singular values of the evolution matrix $\mathbf U(t)$, then we define the finite-time Lyapunov exponents as $\lambda_1(t)\geq \cdots\geq \lambda_n(t)$. We note that the finite-time Lyapunov exponents are defined such that
\begin{equation}\label{lyapunov-def}
\lambda_i(t)/t\to\mu_i \qquad\text{for}\qquad t\to\infty.
\end{equation}
The stochastic equation~\eqref{SDE-main} is a matrix-valued generalisation of a geometric Brownian motion and it follows from Theorem~\ref{thm:geo-dyson-brown} in Appendix~\ref{appendix} that the finite-time Lyapunov exponents are given as a geometric Dyson Brownian motion,
\begin{equation}\label{SDE-lambda}
d\lambda_i(t)=(1+2\tau)^{1/2}\sigma dB_{i}(t)-\mu dt+\frac{(1+\tau)\sigma^2}2\sum_{j=1,j\neq i}^n\frac{dt}{\tanh(\lambda_i(t)-\lambda_j(t))},
\qquad i=1,\ldots,n.
\end{equation}
Furthermore, let $\rho(\bm\lambda;t)$ denote the joint probability density function for the finite-time Lyapunov exponents, then it follows from Corollary~\ref{cor:dyson}, that we have a Fokker--Planck--Kolmogorov equation given by
\begin{equation}\label{PDF-lambda}
\frac{\p\rho(\bm\lambda;t)}{\p t}=\sum_{i=1}^n\bigg(\mu\frac{\p\rho(\bm\lambda;t)}{\p\lambda_i}
-\frac{(1+\tau)\sigma^2}2\sum_{j=1,j\neq i}^n\frac{\p}{\p\lambda_i}\frac{\rho(\bm\lambda;t)}{\tanh(\lambda_i-\lambda_j)}
+\frac{(1+2\tau)\sigma^2}{2}\frac{\p^2\rho(\bm\lambda;t)}{\p\lambda_i^2}\bigg)
\end{equation}
with initial conditions $\lambda_1(0)=\cdots=\lambda_n(0)=0$.
These two equations give us a full description of the finite-time Lyapunov exponents at any time $t>0$. Figure~\ref{fig} shows a typical realisation of the finite-time Lyapunov exponents as described by the stochastic equations~\eqref{SDE-lambda}. We see that the finite-time Lyapunov exponents appear to diverge linearly in the long time limit. 
In the next section, we will look at the asymptotic behaviour of the finite-time Lyapunov exponents and use this to draw a stability-complexity phase diagrams. 
\begin{figure}[htbp]
\centering
 \includegraphics{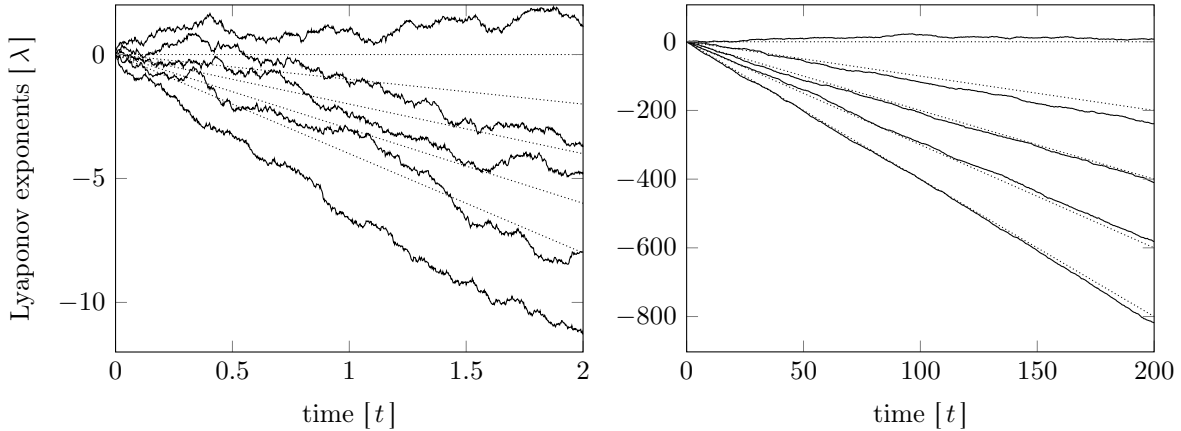}
\caption{The figure shows realisations of the geometric Dyson Brownian motion~\eqref{SDE-lambda} on time scales $t\in[0,2]$ (left panel) and $t\in[0,200]$ (right panel). Paths on both panels are for the parameter choice $(n,\mu,\sigma,\tau)=(5,2,1,0)$, and the paths themselves are constructed using $1\,000\,000$ time steps ($1\,000$ of these are used for the plots). The dotted lines shows the linear part of the large-$t$ asymptotic behaviour~\eqref{asymp}.}\label{fig}
\end{figure}

\section{May--Wigner transition and the weak-noise limit}
\label{sec:stability}

In this section we will use the stochastic description of the finite-time Lyapunov exponents~\eqref{SDE-lambda} and~\eqref{PDF-lambda} to describe stability of our system. The first step is to look at the long-time behaviour of the finite-time Lyapunov exponents, i.e. Lyapunov exponents themselves. The Lyapunov exponents~\eqref{lyapunov-def} can be obtained rigorously using several different approaches~\cite{LeJan:1985,Baxendale:1986,Newman:1986,NRW:1986} and can (up to a shift $\mu$) be found in~\cite{LeJan:1985} (see Proposition~1 and the remark below). We see no reason to repeat any of these methods here, rather let us provide a heuristic (but physically illuminative) derivation. 

Our heuristic derivation is based on the justified (see e.g.~\cite{Newman:1986}) assumption that the Lyapunov spectrum is non-degenerate, $\mu_1>\cdots>\mu_n$. Consequently, the finite-time Lyapunov exponents diverge as
\begin{equation}\label{lya-sep}
\abs{\lambda_i(t)-\lambda_j(t)}\sim\abs{\mu_i-\mu_j}t
\qquad \text{for} \quad t\to\infty \quad \text{and} \quad i\neq j.
\end{equation}
To proceed, we pick two times $t>s>0$ and assume that the finite-time Lyapunov exponents at time $s$ are known and pairwise distinct. The evolution of the exponents from time $s$ to time $t$ is given by our stochastic equations~\eqref{SDE-lambda},
\begin{equation}
\lambda_i(t)=\lambda_i(s)+(1+2\tau)^{1/2}\sigma (B_{i}(t)-B_{i}(s))-\mu (t-s)
+\frac{(1+\tau)\sigma^2}2\sum_{j=1,j\neq i}^n\int_s^t\frac{du}{\tanh(\lambda_i(u)-\lambda_j(u))}.
\end{equation}
Now, assuming that $s\gg1$ and using the asymptotic formulae~\eqref{lya-sep} together with the ordering $\mu_1>\cdots>\mu_n$ gives
\begin{equation}
\int_s^t\frac{du}{\tanh(\lambda_i(u)-\lambda_j(u))}\sim
\begin{cases}t-s & \quad\text{for}\quad i<j \\ s-t & \quad\text{for}\quad i>j\end{cases}
\end{equation}
with exponentially suppressed correction terms. Thus, the asymptotic behaviour of the finite-time Lyapunov exponents is given by
\begin{equation}\label{asymp}
\lambda_i(t)\sim\lambda_i(s)+(1+2\tau)^{1/2}\sigma (B_{i}(t)-B_{i}(s))+\mu_i(t-s)
\end{equation}
with
\begin{equation}\label{lya-value}
\mu_i=\frac{(1+\tau)\sigma^2(n-2i+1)}2-\mu.
\end{equation}
We note that the asymptotic formula~\eqref{asymp} is valid for any $n\geq1$.

The approximation scheme given by~\eqref{asymp} and~\eqref{lya-value} contains important information about our system. We see that at short times the finite-time Lyapunov exponents are strongly interacting (they repel), while in the asymptotic limit they become independent Brownian motions each with their own drift term $\mu_i$. We note that $\lambda_i(t)/t\to\mu_i$ for $t\to\infty$, hence the drift terms are the Lyapunov exponents. A numerical verification of this phenomenon is seen on figure~\ref{fig}.

Knowledge about the fluctuations of the largest Lyapunov exponent is important for studies of the May--Wigner transition (see e.g.~\cite{MS:2014}). However, in this paper we will limit ourselves to finding the location of the phase transition (i.e. we determine the phase diagram) and leave a detailed study of the phase transition itself for future work. Our system is said to be stable (unstable) if the largest Lyapunov exponent is less (greater) than zero, $\mu_1<0$ ($\mu_1>0$). Thus, it is immediate from~\eqref{lya-value} that our system is stable if $(1+\tau)(n-1)/2<\mu/\sigma^2$ and unstable otherwise. It is usual to introduce rescaled units $\hat\mu=\mu/\sigma^2n$, such that in the large-$n$ limit the May--Wigner transition happens at the critical value $\hat\mu_c=(1+\tau)/2$. It is worth to compare this result with the linear model~\eqref{linear} studied in~\cite{IS:2016}. We note that the two models predict a phase transition at the same location, namely $\hat\mu_c=(1+\tau)/2$. However, there is a crucial difference between the two models: the linear models allow an interpolation parameter $-1\leq\tau\leq+1$ (see~\cite{IS:2016}), while our present model restrict parameter space to $0\leq \tau\leq +1$. We note that the difference between the two models boils down to whether we first do a linearisation and then impose symmetry constraints (as in~\cite{IS:2016}) or we first impose symmetry constraints and then linearise (as in this paper). 

The above given description of the phase diagram rests upon the assumption that the Lyapunov spectrum is non-degenerate. This is true for $\sigma>0$ but fails for $\sigma=0$ where all Lyapunov exponents are identical and equal to $\mu$. This suggest that there is special behaviour in the weak-noise limit $\sigma\to0$. 

Formally the weak-noise limit is obtained by introducing $\sigma=\epsilon\tilde\sigma$ and $\lambda_i(t)=\epsilon\tilde\lambda_i(t)$ followed by the limit $\epsilon\to0$. With this rescaling, the geometric Dyson Brownian motion~\eqref{SDE-lambda} reduces to
\begin{equation}\label{dyson-process}
d\tilde\lambda_i(t)=(1+2\tau)^{1/2}\tilde\sigma dB_{i}(t)-\mu dt
+\frac{(1+\tau)\tilde\sigma^2}2\sum_{j=1,j\neq i}^n\frac{dt}{\tilde\lambda_i(t)-\tilde\lambda_j(t)},
\qquad i=1,\ldots,n.
\end{equation}
Likewise, the Fokker--Planck--Kolmogorov equation becomes
\begin{equation}
\frac{\p\tilde\rho(\bm{\tilde\lambda};t)}{\p t}=\sum_{i=1}^n\bigg(\mu\frac{\p\tilde\rho(\bm{\tilde\lambda};t)}{\p\lambda_i}
-\frac{(1+\tau)\sigma^2}2\sum_{j=1,j\neq i}^n
\frac{\p}{\p\tilde\lambda_i}\frac{\tilde\rho(\bm{\tilde\lambda};t)}{\tilde\lambda_i-\tilde\lambda_j}
+\frac{(1+2\tau)\sigma^2}{2}\frac{\p^2\tilde\rho(\bm{\tilde\lambda};t)}{\p\tilde\lambda_i^2}\bigg).
\end{equation}
We recognise this as an ordinary Dyson Brownian motion~\cite{Dyson:1962brown} (see~\cite{Katori:2016} for a review). This stochastic system has a completely different asymptotic behaviour. Most importantly, the interaction between finite-time Lyapunov exponents does not die out in the long-time regime.

Let us consider the stability of the weak-noise system~\eqref{dyson-process}. In this case the Lyapunov spectrum becomes trivial $\mu_1=\cdots=\mu_n=\mu$. This is merely a consequence of the finite-time Lyapunov exponents separating slower than linearly with time, $\abs{\tilde\lambda_i(t)-\tilde\lambda_j(t)}=o(t)$ for $t\to\infty$. Thus, we cannot use the Lyapunov exponents to say anything about stability. However, we have more knowledge than the Lyapunov exponents; we know that our system is stable if all the finite-time Lyapunov exponents become  negative as time tends to infinity. It is well-known (see e.g.~\cite{Katori:2016}) that the global spectrum of the Dyson process~\eqref{dyson-process} is a growing semi-circle with drift and that the largest finite-time Lyapunov exponent grows as
\begin{equation}
\tilde\lambda_1(t)\sim\tilde\sigma\sqrt{(1+\tau)nt}-\tilde\sigma n^{1/2}\tilde\mu t+o(n^{1/2})
\qquad\text{for}\qquad t\to\infty,
\end{equation}
where we have introduced the convenient scale $\tilde\mu=\mu/\tilde\sigma n^{1/2}$. We see that the critical value for stability is $\tilde\mu_c=((1+\tau)/t)^{1/2}$ which tends to zero as time goes to infinity. We conclude that any weak-noise system (independently of the system size) with a positive decay rate $\mu$ is asymptotically stable. However, from a physical perspective it worth noting that the relevant time-scale for stability is set by $t\gg\tilde\sigma n^{1/2}$, hence we might need to wait a while.

\section{Time-effects and persistent auto-correlations}
\label{sec:time}

Let us return to our dynamical system~\eqref{SDE} described in Section~\ref{sec:model}. In this section, we look at the effect of weakening the condition that the correlations of our vector field must be white-in-time. More precisely, we will replace the correlation tensor~\eqref{covar} with
\begin{equation}\label{covar-time}
\E[f_i(\vec x;t)f_j(\vec x+\vec r,s)]=D_{ij}(\vec r)g(\abs{t-s}),\qquad i,j=1,\ldots,n,
\end{equation}
where $D_{ij}(\vec r)$ is given by~\eqref{covar-full} and $g$ is a smooth monotonically decreasing function. This new dynamical system has the same symmetries as our original system, i.e. the vector field is homogeneous and isotropic in space as well as stationary and time-reversible. However, we will see below that the non-trivial time-dependence given by the function $g$ might introduce persistent time-correlations and thereby have fundamental consequences for the dynamics.

We will proceed as in Section~\ref{sec:from-DS-to-gDBm} and investigate evolution of a small initial perturbation. Following the same steps as in Section~\ref{sec:from-DS-to-gDBm}, we see that the perturbation $\vec r(t)$ once again satisfies an integral equation
\begin{equation}
\vec r(t)-\vec r(s)=-\mu \int_s^t\vec r(u)du+\int_s^t \mathbf J(\vec x(u);u)\vec r(u)du,
\end{equation}
where the Jacobian matrix $(J_{ij})=(\p f_i/\p x_j)$ is a Gaussian noise matrix. The Jacobian correlation tensor reads
\begin{equation}
\E[J_{ik}(\vec x;t)J_{j\ell}(\vec y;s)]=C_{ij;k\ell}(\vec x-\vec y)g(\abs{t-s})
\end{equation}
with $C_{ij;k\ell}$ being a fourth-order tensor given by differentiation of~\eqref{covar-time}. Explicitly, we have
\begin{multline}\label{C-tensor}
C_{ij;k\ell}(\vec r)=-\Gamma_1'\Big(\frac{\norm{\vec r}^2}2\Big)\delta_{ij}\delta_{k\ell}
-\Gamma_2\Big(\frac{\norm{\vec r}^2}2\Big)(\delta_{ik}\delta_{j\ell}+\delta_{i\ell}\delta_{jk})
-\Gamma_1''\Big(\frac{\norm{\vec r}^2}2\Big)\delta_{ij}r_kr_\ell \\
-\Gamma_2'\Big(\frac{\norm{\vec r}^2}2\Big)
(\delta_{ik}r_jr_\ell+\delta_{jk}r_ir_\ell+\delta_{i\ell}r_jr_k+\delta_{j\ell}r_ir_k+\delta_{k\ell}r_ir_j)
-\Gamma_2''\Big(\frac{\norm{\vec r}^2}2\Big)r_ir_jr_kr_\ell.
\end{multline}
As in Section~\ref{sec:from-DS-to-gDBm}, rather than working with the Jacobian matrix itself, it is convenient to introduce the integrated process 
\begin{equation}\label{int-process}
\mathbf B(\vec x(t);t)\stackrel{\text{def}}{=}\int_0^t\mathbf J(\vec x(u);u)du.
\end{equation}
By construction, this integrated process is zero-mean Gaussian with auto-correlations 
\begin{equation}\label{cor-int-tensor}
\E[B_{ik}(\vec x(t);t)B_{j\ell}(\vec x(s);s)]
=\int_0^t\int_0^sC_{ij;k\ell}(\vec x(u)-\vec x(v))g(\abs{u-v})dvdu.
\end{equation}
We see that in the white-in-time limit, $g(t)\to\delta(t)$, the correlations~\eqref{cor-int-tensor} with~\eqref{C-tensor} for the integrated process~\eqref{int-process} reduces to the matrix-valued Brownian motion~\eqref{cor-brown}. Thus, the integrated process becomes independent of the spatial location and we reclaim our previous results (as we should). 

Intuitively, we expect that if the function $g(t)$ is sharply peaked about the origin then we may use the white-in-time limit as an approximation. In order to put a physical handle on this phenomenon, let us introduce the concept of an auto-correlation time for a trajectory. Consider a trajectory $\vec x(t)$ given as the solution to the differential equation~\eqref{SDE} with initial condition $\vec x(0)=\vec x$. We define the corresponding auto-correlation time as
\begin{equation}\label{auto-time}
T(\vec x)\stackrel{\text{def}}{=}
\lim_{t\to\infty}\int_0^t \frac{\E[\vec f(\vec x;0)\bm\cdot\vec f(\vec x(u);u)]}{\E[\norm{\vec f(\vec x;0)}^2]} du
=\frac{1}{\Gamma_1(0)g(0)}\lim_{t\to\infty}\int_0^t \Gamma_1\bigg(\frac{\norm{\vec x(u)-\vec x}^2}2\bigg)g(\abs{u}) du.
\end{equation}
The auto-correlation time~\eqref{auto-time} may be considered as a measure of `trajectory memory' for a trajectory starting at $\vec x$. 
We note that spatial homogeneity in our dynamical system~\eqref{SDE} is explicitly broken for $\mu>0$, thus trajectories and therefore the auto-correlation time~\eqref{auto-time} will in general depend on the starting point $\vec x$. 

Let us take a closer look at the auto-correlation time~\eqref{auto-time}. We know from the constraints~\eqref{ineq} that $\Gamma_1'(0)\leq0\leq\Gamma_1(0)$. In other words, the function $\Gamma_1$ is decreasing in a neighbourhood of zero.
From a physical perspective, it is natural to assume that this extends to a global property, i.e. spatial correlations decays with separation.
With this assumption, we have the upper bound $\Gamma_1(x)\leq \Gamma_1(0)$ which ensures that the auto-correlation time is finite throughout space as long as $g$ decays sufficiently fast. The white-in-time approximation is then reasonable on time-scales $t\gg T$. On the other hand, if $g$ decays too slowly then the auto-correlation time might be divergent, which indicates persistent time-correlations. The worst case scenario is when $g$ is constant, which corresponds to the Fyodorov--Khoruzhenko model~\cite{FK:2016}. In this scenario, it is evident from~\eqref{auto-time} that the auto-correlation time will diverge if the trajectory remains in a small region of space for a long time, which is the case for trajectories starting in the neighbourhood of an equilibrium. 

\section{Conclusions and open problems}
\label{sec:conclusion}

In this paper we have introduced a generic model for study of stability in large complex systems. Our model may be considered a non-autonomous version of a model recently introduced by Fyodorov and Khoruzhenko~\cite{FK:2016}. However, the behaviour of our model differs considerably from the behaviour of the Fyodorov--Khoruzhenko model due to its explicit time-dependence. Most importantly, we have seen that the stability-complexity phase transition (or May--Wigner transition) happens at a scale proportional to the system size $n$, while the same transition happens at scale proportional to $n^{1/2}$ in the Fyodorov--Khoruzhenko model. These types of scaling relations are important as they present one of the most promising avenues towards experimental tests. We note that a recent paper~\cite{BR:2016} claims to have experimentally confirmed the scaling $n^{1/2}$ from a closely related model~\cite{SCS:1988} occurring in the description of neural networks.

From a more theoretical perspective, we have seen that it is possible to construct a stochastic description of the finite-time Lyapunov exponents for spatial homogeneous and isotropic dynamical systems under the assumption that the temporal fluctuations are sufficiently rapid. If the temporal fluctuations are too slow then the stochastic description breaks down and trajectories with persistent time-correlations become important for the dynamics. 

Moreover, we have shown that there is a close connection between the non-linear dynamical system studied in this paper and the linear model~\eqref{linear} studied in~\cite{IS:2016}. That is a connection to matrix-valued stochastic equations of multiplicative type and geometric Dyson Brownian motions. In our opinion, these models are by themselves interesting. For instance, it would be intriguing to study the global spectrum of the finite-time Lyapunov exponents, which are related to the theory of free stochastic equation. This is of particularly interest since the short-time behaviour of the finite-time Lyapunov exponents must be very different from the asymptotic behaviour. It would also be worthwhile to study local correlations; in particular the largest exponent which determines the behaviour at the phase transition (see e.g.~\cite{MS:2014} where fluctuations of of the largest Lyapunov exponent was used to show that the May--Wigner transition for symmetric Gaussian community matrices is a third-order phase transtion).
Another problem worth pursuing is a study of discrete time models, which would relate to products of random matrices~\cite{CPV:1993,AI:2015}.

\paragraph{Acknowledgement:} We thank Peter Forrester for comments on a first draft of this paper, Yan Fyodorov for sharing an early version of the paper~\cite{BAFK:2017}, and Andre Peterson for bringing the paper~\cite{BR:2016} to our attention. The author acknowledge financial support by ARC Centre of Excellence for Mathematical and Statistical frontiers (ACEMS).

\appendix

\section{Geometric Dyson Brownian motion}
\label{appendix}

Let $\mathbf B(t)$ be a $\mathfrak{gl}_n(\R)$-valued isotropic Brownian motion, i.e. $\mathbf B(t)=[B_{ij}(t)]$ is an $n$-by-$n$ matrix those entries are Brownian motions such that~\cite{LeJan:1985,IS:2016}
\begin{equation}\label{brown-corr}
\E\big[B_{ij}(t)\big]=0 \qquad \text{and} \qquad
\E\big[B_{ik}(t)B_{j\ell}(s)]=(a\delta_{ij}\delta_{k\ell}+b\delta_{ik}\delta_{j\ell}+c\delta_{i\ell}\delta_{jk})\min\{t,s\}
\end{equation}
with $a\pm c\geq 0$ and $a+nb+c\geq 0$. Consider the matrix-valued generalisation of a geometric Brownian motion,
\begin{equation}\label{geo-brown}
d\mathbf U(t)=\big(d\mathbf B(t)-\mu dt\mathbf I\big)\circ \mathbf U(t)
=\big(d\mathbf B(t)+(\tfrac{a+b+nc}{2}-\mu)dt\mathbf I\big)\mathbf U(t),\qquad \mathbf U(0)=\mathbf I,
\end{equation}
where $\mathbf I$ denotes the identity matrix and $\mathbf U(t)$ is a stochastic process which is $\textup{GL}_n(\R)$-valued (a.s.) in virtue of the Stroock--Varadhan support theorem. 

Let $e^{\lambda_1(t)},\ldots,e^{\lambda_n(t)}$ denote the singular values of $\mathbf U(t)$ at time $t>1$. We define the finite-time Lyapunov exponents as ${\lambda_1(t)},\ldots,{\lambda_n(t)}$ and we observe that these exponents are finite and pairwise distinct for any fixed time $t$ (a.s.) unless $a+c=b=0$. For $a+c=b=0$, the Brownian motion $\mathbf B(t)$ is anti-symmetric, hence $\mathbf U(t)$ belongs to orthogonal group (a.s.). Consequently, all the finite-time Lyapunov exponents are trivially equal to zero in this case.

\begin{theorem}\label{thm:geo-dyson-brown}
Let $\mathbf U(t)$ be a matrix-valued geometric Brownian motion defined as the solution to~\eqref{geo-brown}, then the corresponding finite-time Lyapunov exponents are solutions to the coupled stochastic equations
\begin{equation}\label{geo-dyson-brown}
d\lambda_i(t)=(a+b+c)^{1/2}dB_{i}(t)-\mu dt+\frac{a+c}2\sum_{j=1,j\neq i}^n\frac{dt}{\tanh(\lambda_i(t)-\lambda_j(t))}
\qquad i=1,\ldots,n
\end{equation}
with $B_1(t),\ldots,B_n(t)$ being independent standard Brownian motions. 
\end{theorem}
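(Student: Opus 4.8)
The plan is to pass from the multiplicative matrix equation to the symmetric positive-definite process $\mathbf S(t)=\mathbf U(t)\mathbf U(t)^T$, whose eigenvalues are exactly the squared singular values $e^{2\lambda_1(t)},\dots,e^{2\lambda_n(t)}$; I use $\mathbf U\mathbf U^T$ rather than $\mathbf U^T\mathbf U$ precisely because the noise in~\eqref{geo-brown} acts on the left, so that $d\mathbf B$ will hit $\mathbf S$ directly. First I would convert~\eqref{geo-brown} to the It\^o form $d\mathbf U=d\mathbf B\,\mathbf U+\kappa\,\mathbf U\,dt$ with $\kappa=\tfrac{a+b+nc}{2}-\mu$, and apply the It\^o product rule. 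The martingale part of $d\mathbf S$ is $d\mathbf B\,\mathbf S+\mathbf S\,d\mathbf B^T$, and the It\^o correction is read off from~\eqref{brown-corr}: a short index contraction using $\sum_k U_{pk}U_{qk}=S_{pq}$ gives $(d\mathbf U)(d\mathbf U)^T=\big(a\tr(\mathbf S)\mathbf I+(b+c)\mathbf S\big)dt$, so that
\begin{equation}
d\mathbf S=d\mathbf B\,\mathbf S+\mathbf S\,d\mathbf B^T+\big(2\kappa\,\mathbf S+a\tr(\mathbf S)\mathbf I+(b+c)\mathbf S\big)dt .
\end{equation}

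Next I would apply the standard It\^o formula for the eigenvalues $s_i=e^{2\lambda_i}$ of a symmetric matrix semimartingale with (a.s.) simple spectrum. Writing $\mathbf v_i$ for the orthonormal eigenframe and $\beta_{ij}:=\mathbf v_i^T d\mathbf B\,\mathbf v_j$ for the projected noise,
\begin{equation}
ds_i=\mathbf v_i^T(d\mathbf S)\,\mathbf v_i+\sum_{j\neq i}\frac{d\big[\,\mathbf v_i^T(d\mathbf S)_{\mathrm{mart}}\mathbf v_j\,\big]}{s_i-s_j},
\end{equation}
the last term being the quadratic variation of the off-diagonal martingale divided by the spectral gaps. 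The crucial observation is that the covariance tensor in~\eqref{brown-corr} is assembled purely from Kronecker deltas and is therefore invariant under the (random, time-dependent) orthogonal change of frame $\mathbf v_i$; a direct contraction exploiting orthonormality gives $d[\beta_{ij},\beta_{k\ell}]=(a\delta_{ik}\delta_{j\ell}+b\delta_{ij}\delta_{k\ell}+c\delta_{i\ell}\delta_{jk})\,dt$, i.e. $\beta$ inherits exactly the law of $\mathbf B$. This renders every covariation explicit: $\mathbf v_i^T(d\mathbf S)_{\mathrm{mart}}\mathbf v_j=s_j\beta_{ij}+s_i\beta_{ji}$, whose quadratic variation is $\big(a(s_i^2+s_j^2)+2cs_is_j\big)dt$, while the diagonal martingale $2s_i\beta_{ii}$ has $d[\beta_{ii},\beta_{ii}]=(a+b+c)\,dt$.

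Finally I would collect terms and pass to $\lambda_i=\tfrac12\log s_i$ by It\^o, using $d[s_i,s_i]=4s_i^2(a+b+c)\,dt$ for the logarithmic correction $-(a+b+c)\,dt$. Absorbing the $k=i$ contribution of the $a\tr(\mathbf S)$ diagonal into the off-diagonal sum, each repulsion summand collapses to
\begin{equation}
\tfrac12\,\frac{as_i+(a+2c)s_j}{s_i-s_j}=\tfrac{a+c}{2}\,\frac{s_i+s_j}{s_i-s_j}-\tfrac{c}{2}
=\tfrac{a+c}{2}\coth(\lambda_i-\lambda_j)-\tfrac{c}{2},
\end{equation}
since $(s_i+s_j)/(s_i-s_j)=\coth(\lambda_i-\lambda_j)$. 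The constant pieces — from $\kappa$, the $\tfrac{a}{2}$ and $\tfrac{b+c}{2}$ above, the $-\tfrac{c}{2}(n-1)$ from the sum, and the $-(a+b+c)$ It\^o term — then cancel exactly down to $-\mu$, leaving the drift $-\mu+\tfrac{a+c}{2}\sum_{j\neq i}1/\tanh(\lambda_i-\lambda_j)$. The driving noise is $\beta_{ii}$ with variance $(a+b+c)\,dt$, which I write as $(a+b+c)^{1/2}dB_i$; one checks $d[\beta_{ii},\beta_{jj}]=b\,dt$ for $i\neq j$, a rank-one common mode that shifts only the centre of mass $\sum_i\lambda_i$ and not any gap $\lambda_i-\lambda_j$, so the system may be presented with the independent drivers of~\eqref{geo-dyson-brown}.

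The hardest part is not any single contraction but justifying the eigenvalue It\^o formula in the presence of the random eigenframe $\mathbf v_i(t)$: one needs the spectrum of $\mathbf S(t)$ to remain simple for every fixed $t$ (a.s.), which is exactly the non-degeneracy noted before the theorem, valid except in the excluded case $a+c=b=0$, and one must argue that the second-order perturbation is governed solely by the quadratic variation of the off-diagonal projected noise. The frame-invariance of the covariance tensor is what tames this, reducing the whole problem to the explicit covariations above; the remaining algebra merging the $a$- and $c$-contributions into a single $\coth$ is routine but must be tracked carefully so that the constants visibly cancel.
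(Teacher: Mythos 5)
Your proposal follows essentially the same route as the paper's proof: pass to $\mathbf S=\mathbf U\mathbf U^T$, extract the It\^o correction from~\eqref{brown-corr}, apply second-order eigenvalue perturbation using the frame-invariance of the isotropic covariance, and change variables to $\lambda_i=\tfrac12\log s_i$, with all of your intermediate contractions agreeing with the paper's~\eqref{S-increment}--\eqref{dSij} and the constants cancelling to $-\mu$ exactly as there. The one delicate point --- replacing the diagonal drivers $\beta_{ii}$, which carry off-diagonal covariance $b\,dt$, by independent Brownian motions of variance $(a+b+c)\,dt$ --- you actually flag and discuss more explicitly than the paper, which performs the same substitution without comment.
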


\begin{corollary}\label{cor:dyson}
Let $\rho(\,\cdot\,;t)$ denote the joint probability density of the finite-time Lyapunov exponents at time $t>0$, then it satisfies the Fokker--Planck--Kolmogorov equation
\begin{equation}\label{dyson-diff}
\frac{\p}{\p t}\rho(\lambda_1,\ldots,\lambda_n;t)=L\rho(\lambda_1,\ldots,\lambda_n;t),
\end{equation}
where
\begin{equation}\label{dyson-diff-oper}
L=\sum_{i=1}^n\bigg(\mu\frac{\p}{\p\lambda_i}
-\frac{a+c}2\sum_{j=1,j\neq i}^n\frac{\p}{\p\lambda_i}\frac{1}{\tanh(\lambda_i-\lambda_j)}
+\frac{a+b+c}{2}\frac{\p^2}{\p\lambda_i^2}\bigg).
\end{equation}
\end{corollary}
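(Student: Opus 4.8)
The plan is to read the Fokker--Planck--Kolmogorov equation~\eqref{dyson-diff} directly off the system of stochastic differential equations~\eqref{geo-dyson-brown} supplied by Theorem~\ref{thm:geo-dyson-brown}, using the standard correspondence between an It\^o diffusion and the forward Kolmogorov equation for its transition density. Writing~\eqref{geo-dyson-brown} in the generic form $d\lambda_i=m_i(\lambda_1,\ldots,\lambda_n)\,dt+\sum_k\Sigma_{ik}\,dB_k$, I would first identify the drift vector and the diffusion matrix as
\begin{equation}
m_i=-\mu+\frac{a+c}2\sum_{j=1,j\neq i}^n\frac{1}{\tanh(\lambda_i-\lambda_j)},
\qquad
\Sigma_{ik}=(a+b+c)^{1/2}\delta_{ik}.
\end{equation}

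The key observation is that the diffusion coefficient is constant, i.e.\ independent of the state $(\lambda_1,\ldots,\lambda_n)$, so that $(\Sigma\Sigma^T)_{ik}=(a+b+c)\delta_{ik}$ and, crucially, the It\^o and Stratonovich readings of~\eqref{geo-dyson-brown} coincide. This removes any It\^o--Stratonovich ambiguity at the outset and means no spurious drift correction is produced when passing to the Fokker--Planck operator. Substituting $m_i$ and $\Sigma\Sigma^T$ into the general forward equation
\begin{equation}
\frac{\p\rho}{\p t}=-\sum_{i=1}^n\frac{\p}{\p\lambda_i}\big(m_i\rho\big)
+\frac12\sum_{i,k=1}^n\frac{\p^2}{\p\lambda_i\p\lambda_k}\big((\Sigma\Sigma^T)_{ik}\rho\big),
\end{equation}
the double sum collapses to $\tfrac{a+b+c}2\sum_i\p^2_{\lambda_i}\rho$, the $-\mu$ part of the drift contributes $\mu\sum_i\p_{\lambda_i}\rho$, and the hyperbolic part reproduces $-\tfrac{a+c}2\sum_i\sum_{j\neq i}\p_{\lambda_i}\big(\rho/\tanh(\lambda_i-\lambda_j)\big)$. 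Collecting these three contributions is exactly the operator $L$ of~\eqref{dyson-diff-oper}, which completes the formal derivation.

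The step I expect to require the most care is not the symbol-pushing but the justification that this forward equation genuinely governs a smooth density on the correct domain. The drift $m_i$ is singular on the coincidence set $\{\lambda_i=\lambda_j\}$, where $\tanh(\lambda_i-\lambda_j)\to0$, so one must argue that the process never reaches this set. This is precisely the non-collision property recorded before the theorem (the exponents are a.s.\ pairwise distinct whenever $a+c=b=0$ fails): the $1/\tanh$ term acts as a strong repulsion in the open Weyl chamber $\lambda_1>\cdots>\lambda_n$, keeping the process in its interior for all $t>0$ despite the coincident initial condition. Hence $\rho(\,\cdot\,;t)$ is supported there, vanishes rapidly at the chamber walls, and the singular drift is integrated against a density that decays fast enough for all boundary terms to vanish. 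Together with the non-degeneracy $a+b+c>0$, which renders the generator uniformly elliptic and yields a smooth transition density, this upgrades the formal computation to the bona fide statement that $\rho(\,\cdot\,;t)$ solves~\eqref{dyson-diff}.
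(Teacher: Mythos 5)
Your proposal is correct and follows essentially the same route as the paper: the paper's proof sketch introduces a test function, obtains $\partial_t\E[\phi]=\E[L^*\phi]$ by Taylor expansion, and integrates by parts, which is precisely the derivation of the general forward Kolmogorov equation that you quote and then specialise to the drift and (constant, diagonal) diffusion coefficients of~\eqref{geo-dyson-brown}. Your additional remarks on the vanishing It\^o--Stratonovich correction and on the non-collision of the exponents make explicit points the paper leaves implicit, but they do not change the argument.
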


\begin{remark}
We will refer to~\eqref{geo-dyson-brown} as a geometric Dyson Brownian motion.
Note that a geometric Dyson Brownian motion with $b=c=\mu=0$ was considered in~\cite{NRW:1986}, while we considered the case $b=0$ in~\cite{IS:2016} where also a complex-valued version was introduced and studied. Actually, the Fokker--Planck--Kolmogorov equation~\eqref{dyson-diff} is not explicitly mentioned in~\cite{NRW:1986}, while the stochastic formulation~\eqref{geo-dyson-brown} is not explicitly mentioned in~\cite{IS:2016}, but these two formulations are naturally closely related.
\end{remark}

\begin{proof}[Proof of Theorem~\ref{thm:geo-dyson-brown}]
We are interested in the finite-time Lyapunov exponents which relates to the singular values of $\mathbf U(t)$, thus it is convenient to introduce a Wishart-type matrix
\begin{equation}
\mathbf S(t)=\mathbf U(t)\mathbf U(t)^T.
\end{equation}
For a small time increment $dt>0$, it follows from~\eqref{geo-brown} that 
\begin{equation}\label{S-increment}
d\mathbf S(t)=d\mathbf B(t)\mathbf S(t)+\mathbf S(t)d\mathbf B(t)^T+d\mathbf B(t)\mathbf S(t)d\mathbf B(t)^T
+(a+b+nc-2\mu) dt\mathbf S(t)+O(dt^{3/2})
\end{equation}
As $\mathbf S(t)$ is a symmetric matrix, we may proceed as for ordinary Dyson Brownian motion~\cite{Dyson:1962brown} and use perturbation theory. If $s_1(t),\ldots,s_n(t)$ denote the eigenvalues of $\mathbf S(t)$, then we have
\begin{equation}\label{perturb}
ds_i(t)=dS_{ii}(t)+\sum_{j=1,j\neq i}^n\frac{dS_{ij}(t)dS_{ji}(t)}{s_i(t)-s_j(t)}+O(dt^{3/2}), \qquad i=1,\ldots,n.
\end{equation}
We recall that the eigenvalues $s_1(t),\ldots,s_n(t)$ are pairwise distinct (a.s.). Furthermore, it follows from~\eqref{S-increment} and~\eqref{brown-corr} that
\begin{equation}\label{dSii}
dS_{ii}(t)=2s_i(t)dB_{ii}(t)+a\,dt\sum_{j=1}^ns_j(t)+(a+2b+(n+1)c-2\mu) dt\, s_i(t)+O(dt^{3/2})
\end{equation}
and
\begin{equation}\label{dSij}
dS_{ij}(t)dS_{ji}(t)=a(s_i(t)^2+s_j(t)^2)dt+2c\, s_i(t)s_j(t)dt+O(dt^{3/2})  \qquad \text{for }i\neq j.
\end{equation}
Inserting~\eqref{dSii} and~\eqref{dSij} into the perturbative expansion~\eqref{perturb} gives
\begin{align}
ds_i(t)=2s_i(t)dB_{ii}(t)&+a\,dt\sum_{j=1}^ns_j(t)+(a+2b+(n+1)c-2\mu) dt\, s_i(t) \nn \\
&+dt\sum_{j=1,j\neq i}^n\frac{a(s_i(t)^2+s_j(t)^2)+2c\, s_i(t)s_j(t)}{s_i(t)-s_j(t)}+O(dt^{3/2}),
\end{align}
which after some standard manipulations of the sums may be rewritten as
\begin{equation}
\frac{ds_i(t)}{2s_i(t)}=dB_{ii}(t)+(a+b+c-\mu)dt+\frac{a+c}2dt\sum_{j=1,j\neq i}^n\frac{s_i(t)+s_j(t)}{s_i(t)-s_j(t)}+O(dt^{3/2}).
\end{equation}
It remains to change variables from the eigenvalues $s_1(t),\ldots,s_n(t)$ to the finite-time Lyapunov exponents $\lambda_1(t),\ldots,\lambda_n(t)$. By construction, we have $s_i(t)=e^{2\lambda_i(t)}$, thus it follows by It\^o's formula that
\begin{equation}
\frac{ds_i(t)}{2s_i(t)}=d\lambda_i(t)+(a+b+c)dt+O(dt^{3/2})
\end{equation}
and consequently that
\begin{equation}
d\lambda_i(t)=dB_{ii}(t)-\mu dt+\frac{a+c}2\sum_{j=1,j\neq i}^n\frac{dt}{\tanh(\lambda_i(t)-\lambda_j(t))}+O(dt^{3/2}).
\end{equation}
The final observation is that due to~\eqref{brown-corr}, we may make the replacement $B_{ii}(t)\mapsto(a+b+c)^{1/2}B_i(t)$ where $B_1(t),\ldots,B_n(t)$ are independent standard Brownian motions.
\end{proof}

Corollary~\ref{cor:dyson} follows from the Theorem~\ref{thm:geo-dyson-brown} using standard arguments, see e.g.~\cite{Tao:2012}. Perhaps the easiest approach is to first introduce a test function $\phi:\R^n\to \R$ which is smooth with bounded derivatives. By means of Taylor expansion, it is straightforward to show that for any such test function, we have
\begin{equation}
\frac{\p}{\p t}\E[\phi(\lambda_1,\ldots,\lambda_n)]=\E[L^*\phi(\lambda_1,\ldots,\lambda_n)],
\end{equation}
where $L^*$ is the adjoint geometric Dyson operator, i.e.
\begin{equation}
L^*\phi=\sum_{i=1}^n\bigg(-\mu\frac{\p\phi}{\p\lambda_i}+\frac{a+c}2\sum_{j=1,j\neq i}^n\frac{1}{\tanh(\lambda_i-\lambda_j)}\frac{\p\phi}{\p\lambda_i}
+\frac{a+b+c}{2}\frac{\p^2\phi}{\p\lambda_i^2}\bigg).
\end{equation}
The corollary then follows using integration by parts.




\end{document}